\algnewcommand{\IIf}[1]{\State\algorithmicif\ #1\ \algorithmicthen}
\algnewcommand{\EndIIf}{\unskip\ \algorithmicend\ \algorithmicif}
\DeclareMathOperator{\rem}{\mathbin{\%}}
\DeclareMathOperator{\distance}{\mathrm{distance}}
\DeclareFixedFont{\ttb}{T1}{txtt}{bx}{n}{12} 
\DeclareFixedFont{\ttm}{T1}{txtt}{m}{n}{12}  
\definecolor{deepblue}{rgb}{0,0,0.5}
\definecolor{deepred}{rgb}{0.6,0,0}
\definecolor{deepgreen}{rgb}{0,0.5,0}
\newcommand\pythonstyle{\lstset{
language=Python,
basicstyle=\small,
}}
\newcommand\pythoninline[1]{{\pythonstyle\lstinline!#1!}}
\begin{document}

\title[Exact Short Products]{Exact Short Products From Truncated Multipliers}
\author{Daniel Lemire}
\affiliation{Data Science Research Center, Universit\'e du Qu\'ebec (TELUQ), Montreal, Quebec, H2S 3L5, Canada} \email{daniel@lemire.me}

\shortauthors{D. Lemire}
 

\keywords{Modular Arithmetic, Truncated Multiplication, Short Product}

\begin{abstract}
We sometimes need to compute the most significant digits of
the product of small integers with a multiplier requiring much storage:
e.g., a large integer (e.g., $5^{100}$) or an irrational number ($\pi$). 
We only need to access the most significant digits of the multiplier---as long as the integers are sufficiently small.
We provide an efficient algorithm to compute  the range of integers given a truncated multiplier and a desired number of digits.

\end{abstract}

\maketitle
\section{Introduction}

In applications such as cryptography~\cite{hars2006applications}, digital signal processing~\cite{schulte1993truncated}, number serialization~\cite{10.1145/3192366.3192369,10.1145/3360595} or number parsing~\cite{lemire2021number}, we need
to efficiently compute the most significant digits or bits (binary digits)  of a product. We call such a partial product a \emph{short} product.

We consider the computation of the product between  small integers and a multiplier with many digits.
To illustrate the problem, suppose that we want to compute
the product of $\pi$ with an integer and get 10-digit accurate
answer. With the 13~most significant digits of $\pi$ ($3.141592653589$), 
we get an accurate product for all integers in $[1, 1198)$ no matter the missing digits of $\pi$.
See Table~\ref{table:acc}.
\begin{table}
\caption{Range of integers $w$ such that the 10~most significant digits
 of $w\times \pi$ are exact as a function of the number of most significant digits of $\pi$ used. 
 The bound is blind to the missing digits of $\pi$. \label{table:acc}}
\begin{tabular}{cl}
\toprule
digits of $\pi$ & interval for 10-digit accuracy \\\midrule
10 & $[1, 2)$ \\
11 & $[1, 14)$ \\
12 & $[1, 209)$ \\
13 & $[1, 1198)$ \\
14 & $[1, 18149)$ \\
15 & $[1, 26255)$ \\
16 & $[1, 1454833)$ \\
17 & $[1, 14920539)$ \\
18 & $[1, 14920539)$ \\
19 & $[1, 1963319607)$ \\
20 & $[1, 17329613732)$ \\\bottomrule
\end{tabular}
\end{table}


We are given only the most significant digits of the multiplier (i.e., a \emph{short multiplier}) and we want to know whether we can compute the most significant digits of the product  exactly, for some range of values of $w$. 
We provide an efficient logarithmic-time  algorithm to compute the exact \emph{range of validity} (\S~\ref{sec:validrange}). We make available our algorithms as part of an open-source Python library.\footnote{\url{https://github.com/lemire/exactshortlib}} 

\section{Related Work}

Given only a short multiplier of $z$, we seek the exact range
of values $w$ such that we can compute a given number of most significant digits of the product $w\times z$ using the short multiplier.
One potential application of our work is in number parsing: given the string \texttt{3e100}, we may want to convert it to a 64-bit binary floating-point number:  $3\times 10^{100}  \approx 7721336384202043\times 2^{281}$ where 7721336384202043 is the 53-bit significand chosen to best approximate $3\times 10^{100}$. We have that $3\times 10^{100} = 3 \times 5^{100} \times 2^{100} $. Ignoring the power of two, the significand of the binary floating-point number ($7721336384202043$) may be computed by multiplying the decimal significand (3) by the power of five ($5^{100}$) and selecting the 53~most significant bits. For speed, we may want to avoid computing $5^{100}$ and the full product $3 \times 5^{100}$. Instead, we want to just store the most significant bits of $5^{100}$~\cite{lemire2021number}. 
However, we also need to check the validity of the short multipliers to ensure that the computation of the most significant digits is exact. 

To our knowledge, our problem, the exact computation of the range of validity of a short multiplier, is novel.  Adams~\cite{10.1145/3192366.3192369,10.1145/3360595} considered a related problem in the context of number serialization. They bound the maximum and minimum of $ax \rem b$ over an interval starting at zero. With these bounds, they show that powers of five truncated to 128~bits are sufficient to convert 64-bit binary floating-point numbers into equivalent decimal numbers. 
Specifically, Lemma~3.6 from Adams~\cite{10.1145/3192366.3192369} \emph{computes a conservative approximation of the true minimum and maximum} of $ax \rem b$ for $x\in[0, M]$. In contrast,  we present a logarithmic-time algorithm (in \S~\ref{sec:enumerating}) that provides all of the minima and maxima. This exact result allows us to compute an exact range of validity for a short multiplier.

\section{Mathematical Preliminaries}
\label{sec:math}
We present our core mathematical notation and we review some elementary results. See Table~\ref{tab:conventions}. For simplicity, we avoid references to equivalence classes or other extraneous concepts.

Let $\lfloor x \rfloor$ be the largest integer smaller or equal to $x$. For a number $a$ and another number $b \neq 0$, we define the integer division $a \div b \equiv \lfloor a / b \rfloor$ and the  remainder $ a \rem  b  \equiv a - (a \div b ) \times b$.

We say that $b\neq 0$ divides $a$ if $a\rem b = 0$.
We write the greatest common divisor of integers $a$ and $b$ as $\gcd(a,b)$. We say that $a$ and $b$ are coprime if 
$\gcd(a,b)=1$. 

The smallest integer $a'$ such that 
$a \div b = a' \div b$ is $a'=(a \div b ) \times b= a - (a\rem b)$. 
We have that $(a+b)\rem M=(a\rem M+b\rem M)\rem M$. 

Consider a positive integer divisor $M$. When $a\rem M = 0$ then $(-a)\rem M = 0$. Otherwise, when $a\rem M \neq 0$, then we have $(-a)\rem M = M-(a\rem M)$.

The distance between two integers $a,b$ is often defined as
the absolute value of their difference $\vert a- b \vert$. We use a generalized measure: \begin{eqnarray*}\distance_M (a,b) \equiv \min((a-b)\rem M,(b-a)\rem M).\end{eqnarray*}

Because $(a+b)\rem M=(a\rem M+b\rem M)\rem M$, we have the following elementary results:
\begin{itemize}
    \item $a\rem M + b \rem M < M$ if and only if $a\rem M + b \rem M = (a + b) \rem M$,
    \item $a\rem M + b \rem M \geq  M$ if and only if  $a\rem M + b \rem M  = M + (a + b) \rem M$.
\end{itemize}
Similarly, we have:
\begin{itemize}
    \item $a\rem M+b\rem M< M$ if and only if $(a+b)\rem M\geq \max(a\rem M,b\rem M)$,
    \item $a\rem M+b\rem M\geq M$ if and only if $(a+b)\rem M< \min(a\rem M,b\rem M)$.
\end{itemize}
Further if $a\rem M>0$ and $b\rem M >0$ 
then  $a\rem M+b\rem M< M$ if and only if  $(a+b)\rem M> \max(a\rem M,b\rem M)$.
Given two integers $a, b$ and an integer divisor $M$, we either have
$(a-b) \rem M = a\rem M - b\rem M $ when $a\rem M \geq b\rem M$, 
or $(a-b )\rem M = M+a\rem M - b\rem M$ otherwise.

\begin{table}
    \centering
    \begin{tabular}{cl}\toprule
symbol  & meaning\\\midrule
$\lfloor x \rfloor$ &  largest integer no larger than $x$ \\
$\gcd(a,b)$ &  greatest common divisor of $a$ and $b$ \\
$a \div b$  & integer division of $a$ by $b$: $\lfloor a/b\rfloor$\\
$a \rem b$  & remainder of the division of $a$ by $b$\\
$z$  & a multiplier (e.g., large integer) \\
$w$  & integer to be multiplied by $z$ \\
$\alpha$  & integer corresponding to a minimum  \\
$\beta$  & integer corresponding to a maximum  \\
$M$  & integer divisor  \\
$\distance_M (a,b)$ & $ \min((a-b)\rem M,(b-a)\rem M)$\\
\bottomrule
    \end{tabular}
    \caption{Notational conventions.}
    \label{tab:conventions}
\end{table}

\section{Plan}

Our derivation is organized as follow.
\begin{enumerate}
\item 
We formalize the 
concept of most significant digits of a product in \S~\ref{sec:mostsignificant}. Unsurprisingly, the computation of the most significant digits depends on the size of the product. Effectively, we get the
most significant digits by dividing by some power of the base (e.g., $10 ^3$), and by discarding the remainder.
\item In \S~\ref{sec:shortmultipliers}, we show that
a short multiplier always provides an exact short product if and only if the discarded remainder is not too large (Lemma~\ref{lemma:exact}). 
These remainders take the form  $(w\times z)\div M$  where
$w$ is an integer variable while $z$ and $M$ are integer constants.
By combining the result from the previous section (\S~\ref{sec:mostsignificant}), we describe how checking for exact most-significant digits requires  to bound various remainders over ranges.
We conclude this section with a technical lemma (Lemma~\ref{lemma:helpful}) which suggests that identifying the maxima of remainders is sufficient. Treating  $w\to (w\times z)\div M$ as a function, we need to identify the values of $w$ that provide a new maximum for the expression $(w\times z)\div M$ as $w$ is incremented (e.g., $w=1,2,3,\ldots$).
\item In \S~\ref{sec:enumerating}, we present Lemma~\ref{lemma:technical} which  says that if we know the location of the last maximum ($w=\beta$) and the last minimum ($w=\alpha$), then the next extremum is at the sum of the two ($w=\alpha + \beta$). This leads us to an efficient (logarithmic-time) algorithm to locate all extrema. The result is an algorithm that we present in Python (Fig.~\ref{fig:gaps}): it computes 
the \emph{gaps} between the extrema of $(w \times z)\rem M$ for $w=1,2,\ldots$ In \S~\ref{sec:offset}, we show that these gaps are sufficient to locate efficiently the extrema of  $(w \times z )\rem M$ over a range ($w=A, A+1, \ldots, B$) that does not begin at 1 ($w=1$). Finally, we provide a function \texttt{find\_max\_min} (Fig.~\ref{fig:extrema}) to 
enumerate all extrema  of $(w\times z) \rem M$ for $w = A,\ldots, B$.
\item 
In \S~\ref{sec:computingtherange}, we combine the function \texttt{find\_max\_min} with the results from \S~\ref{sec:shortmultipliers} to arrive at our main function (\texttt{find\_range}) presented in Fig.~\ref{fig:extrema}. Given a short multiplier, and a desired number of digits, it computes the range of validity.
\end{enumerate}
\section{Most Significant Digits}
\label{sec:mostsignificant}
We often represent integers with digits. E.g., the integer 1234 has four~decimal digits. The integer~7 requires three binary digits. The number of digits of the integer $x$ in base $b$ is the smallest integer $d$ such that $x \div b^d = 0$. By convention, the integer 0 has no digit (zero digit) and we do not consider negative integers. We may compute  the number of digits in base $b$ of a positive integer $x$ using the formula $\lceil \log_b (x+1) \rceil$. In base 10, the integers with three digits go from 100 to 999, or from $10^2$ to $10^{3}-1$, inclusively. More generally, an integer has $d$~digits in base $b$ if it is between $b^{d-1}$ and $b^d-1$, inclusively.

The product between an integer having $d_1$~digits and integer having $d_2$~digits is between $b^{d_1+d_2-2}$ and $b^{d_1+d_2}-b^{d_1}-b^{d_2}+1$. (inclusively). Thus the product has either $d_1+d_2-1$~digits or $d_1+d_2$~digits. 
To illustrate, let us consider the product between two integers having three digits. In base 10, the smallest product is 100 times 100 or \num{10000}, so it requires 5~digits. The largest product is 999 times 999 or \num{998001} (6~digits).

Suppose that we want $d\geq 1$~digits of accuracy (in base $b$) for the integer product $w\times z$ given a fixed integer $z$. As much as possible, we want to compute the $d$~most significant digits:
\begin{itemize}
    \item If $b^{d-1}\leq  (w\times z) \leq  b^d-1$, then we output $w\times z$. 
    \item If $ b^d\leq  (w\times z) \leq b^{d+1}-1$, then we output $(w\times z) \div b$.
    \item \ldots
    \item Generally, if $b^{d+k-1}\leq  (w\times z) \leq b^{d+k}-1$ for $k\geq 0$, then we output $(w\times z) \div b^k$.
    
\end{itemize}

\section{Short Multipliers}
\label{sec:shortmultipliers}
Suppose that we want to compute the most significant digits
of $w \pi$ for small integers $w$.
Materializing all digits of $\pi$ is impossible, so we use
a truncated version: $ 3.1416 \times w$.
Thus we may compute the most-significant 
digits of $  3.1416 \times w$. For simplicity, we can omit
the decimal point.
How large could the integer $w$ be if we want two digits of accuracy
assuming we do not use the missing digits of $\pi$?
The answer is that $w$ should not exceed 2068.
It is an instance of the general question we want to be able to answer.

When $10^3 \leq 31416 \times w \leq 10^4 -1$, 
 $ (31416 \times w)\div 10$ provides the two-most significant digits of the product.
We want to determine when  $ (31416 \times w)\div 10$  matches
the value we would get when computing the full product:  $ (31416 \times w)\div 10 = (10000 \pi \times w)\div 10 $
irrespective of the truncated portion of $\pi$ ($10000 \pi-31416$).
The following lemma provides a necessary and sufficient condition.

\begin{lemma}\label{lemma:exact}
Let $z$ be a truncated integer multiplier and $z'=z + \epsilon$ be the exact multiplier with $\epsilon \in [0,1)$. For integers $w$ and $M$,
we have that $( w \times z) \div M = ( w \times z') \div M$ for all $\epsilon$
if and only if $(w\times z) \rem M < M - w + 1$.
\end{lemma}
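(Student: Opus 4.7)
The plan is to translate the statement into a condition on the quotient and remainder of $wz$ under division by $M$, then read off exactly what $\epsilon$ is allowed to do.

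First I would write $z' = z + \epsilon$ with $\epsilon \in [0,1)$, so that $w z' = w z + w \epsilon$. Setting $q = (wz) \div M$ and $r = (wz) \rem M$, we have $wz = qM + r$ with $0 \le r < M$, and therefore $w z' = qM + r + w\epsilon$. Since $\div$ is just the floor of the real quotient, the identity $(wz') \div M = q$ is equivalent to the two inequalities $qM \le w z' < (q+1) M$, which, after subtracting $qM$, collapse to
\begin{equation*}
0 \le r + w\epsilon < M.
\end{equation*}

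Next I would observe that the left inequality is automatic: $r \ge 0$ and $w\epsilon \ge 0$ (we are in the regime where $w$ is a positive integer). So the claim reduces to: the right inequality $r + w\epsilon < M$ holds for every $\epsilon \in [0,1)$ if and only if $r < M - w + 1$. Taking the supremum $\epsilon \to 1^-$ on the left-hand side gives the necessary condition $r + w \le M$; conversely, if $r + w \le M$ then for any $\epsilon < 1$ we have $r + w\epsilon < r + w \le M$. Since $r$ is an integer, $r + w \le M$ is the same as $r \le M - w$, i.e.\ $r < M - w + 1$, which is the stated condition $(wz)\rem M < M - w + 1$.

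The only subtlety is the strict/non-strict boundary as $\epsilon$ approaches $1$: one must argue that if $r + w = M + 1$ or more, some admissible $\epsilon < 1$ already pushes $r + w\epsilon$ to $M$ or beyond, while if $r + w \le M$ no $\epsilon < 1$ can do so. This is a one-line supremum argument rather than a real obstacle. Apart from that, the proof is essentially mechanical: define $q, r$, rewrite $w z'$, and turn the definition of integer division into the two-sided inequality above.
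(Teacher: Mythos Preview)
Your proof is correct and follows essentially the same route as the paper: both write $wz' = wz + w\epsilon$, observe that the quotient is preserved exactly when $r + w\epsilon < M$ for all $\epsilon \in [0,1)$, and then reduce this to the integer condition $r \le M - w$. The only cosmetic difference is that the paper discharges the ``for all $\epsilon$'' by passing to the integer endpoint $wz + w - 1$ and checking $(wz)\div M = (wz+w-1)\div M$, whereas you take the supremum $\epsilon \to 1^-$ directly; the two are equivalent and equally short.
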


\begin{proof}
Let $z$ be the truncated integer multiplier and $z'=z + \epsilon$ be the exact multiplier with $\epsilon \in [0,1)$.
Since $ z \leq z' < z+1$, we have that
$ w \times z \leq w \times z' <  w \times z + w$.
Hence we have that 
$ w \times z \leq w \times z' \leq   w \times z + w -1$.
Thus we have
$( w \times z) \div M \leq  ( w \times z') \div M \leq (w \times z + w - 1) \div M$. 
Therefore, we have that $( w \times z) \div M = ( w \times z') \div M$ for all $z'$
if and only if $( w \times z) \div M = (w \times z + w - 1) \div M$.
Write $(w \times z + w - 1) \div M = ((w \times z)\div M \times M + (w\times z) \rem M + w - 1) \div M $. 
With a little arithmetic, we see from this last equality that 
 $(w\times z) \rem M + w - 1<M$
if and only if $(w \times z + w - 1) \div M = ( w \times z) \div M $.
We have proven the lemma.
\end{proof}

We can apply this condition to the computation of digits. Consider $w\times z$
and suppose you desire to have  $d\geq 1$~digits of accuracy in base $b$.
\begin{itemize}
    \item If $(w\times z)  \leq  b^{d-1}-1$, then we cannot produce $d$~digits by truncation. Thus we may require as a pre-condition that 
    $(w\times z) \geq b^{d-1}$ or $w\geq (b^{d-1}+z-1)\div z$.
    \item If $ b^{d-1}\leq (w\times z) \leq  b^d-1$, then we output $w\times z$ after checking that  $w<2$.
    \item If $ b^d\leq  (w\times z) \leq b^{d+1}-1$, then we output $(w\times z) \div b$ after checking that  $(w\times z) \rem b < b - w + 1$.
    \item \ldots
    \item Generally, if $b^{d+k-1}\leq  (w\times z) \leq b^{d+k}-1$ for $k \geq 0$, then we output $(w\times z) \div b^k$ after checking that  $(w\times z) \rem b^k  < b^k - w + 1$.
\end{itemize}

We would like to check  $(w\times z) \rem b^k  < b^k - w + 1$ efficiently over the range $b^{d+k-1}-1< (w\times z) \leq b^{d+k}-1$ to verify whether we can compute digits exactly.

\subsection{Finding the Valid Range}
\label{sec:validrange}
Suppose that  $(\beta\times z)\rem M$ is the maximum of 
 $(w\times z)\rem M$  for $w=0,1, \ldots, \beta$, then it follows
 that if $(\beta\times z)\rem M\leq M- \beta +1$ is satisfied,
 it must be that 
 $(w\times z)\rem M< M - w +1$ is satisfied for $w=0,1, \ldots, \beta -1$, since $M-w+1$ is strictly decreasing and  $(w\times z)\rem M<(\beta\times z)\rem M$.
 Conversely, the next lemma shows that the first time that  $(w\times z)\rem M< M - w +1$ is falsified,  $(w\times z)\rem M$ is a new maximum. 
      
      \begin{lemma}\label{lemma:helpful}Let $w'$ be the smallest integer value $w'\geq 0 $ such that  $(w'\times z)\rem M\geq  M - w' +1$, then we have that
       $(w'\times z)\rem M> (w\times z)\rem M$ for $w=0,1,\ldots, w'-1$.
      \end{lemma}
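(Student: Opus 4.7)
The plan is to argue by contradiction. Suppose there is some $w$ with $0 \leq w < w'$ such that $(w' \times z) \rem M \leq (w \times z) \rem M$, and aim to violate the defining inequality $(w' \times z) \rem M \geq M - w' + 1$.

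The case $w = 0$ is immediate: $(0 \times z) \rem M = 0$, whereas the defining property forces $(w' \times z) \rem M \geq M - w' + 1$, which is at least $1$ in the meaningful regime $w' \leq M$.

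For $0 < w < w'$, the key idea is to exploit the index $w' - w$, which lies strictly between $0$ and $w'$, so the minimality of $w'$ applies to it and yields
\[
((w' - w) \times z) \rem M < M - (w' - w) + 1.
\]
Assuming strict inequality $(w' \times z) \rem M < (w \times z) \rem M$, the subtraction identity stated in \S\ref{sec:math} gives
\[
((w' - w) \times z) \rem M = M + (w' \times z) \rem M - (w \times z) \rem M.
\]
Substituting this into the previous bound produces $(w' \times z) \rem M < (w \times z) \rem M + w - w' + 1$. A second appeal to the minimality of $w'$, now at the index $w$ itself, gives $(w \times z) \rem M \leq M - w$, and plugging this in collapses the bound to $(w' \times z) \rem M < M - w' + 1$, contradicting the defining property of $w'$.

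The hard part will be the equality case $(w' \times z) \rem M = (w \times z) \rem M$, where the subtraction identity returns $0$ instead of the expression $M + \cdots$, so the substitution above no longer produces the needed contradiction; closing this gap requires an auxiliary observation, presumably tied to $\gcd(z, M)$ or to the period of $w \mapsto (w \times z) \rem M$ relative to the size of $w' - w$.
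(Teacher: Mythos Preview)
Your approach is essentially identical to the paper's: assume a witness $w\in[0,w')$ with $(w z)\rem M\geq (w' z)\rem M$, pass to the index $w'-w$, and use minimality of $w'$ there to force $(w' z)\rem M<M-w'+1$, a contradiction. The paper phrases it with $w_1$ taken to be the location of the maximum over $[0,w')$ and $w_2=w'$, but the substance is the same subtraction argument.

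You are right to flag the equality case as a genuine obstacle. The paper's proof silently assumes the strict case: it passes from $M+(w_2 z)\rem M-(w_1 z)\rem M$ to $((w_2-w_1) z)\rem M$, which is only the correct branch of the subtraction identity when $(w_2 z)\rem M < (w_1 z)\rem M$; under equality the left side is $0$ and the chain breaks. In fact the lemma, as stated with strict inequality, is false in that case. Take $z=2$, $M=4$: the sequence $(w z)\rem M$ for $w=0,1,2,3$ is $0,2,0,2$, the condition $(w z)\rem M\geq M-w+1$ is first met at $w'=3$ with value $2$, yet $(1\cdot z)\rem M=2$ as well, so the claimed strict inequality fails at $w=1$. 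The ``hard part'' you isolated is therefore not a missing step but a defect in the statement itself. Both your argument and the paper's become complete if one either weakens the conclusion from $>$ to $\geq$, or adds the hypothesis $w'<M/\gcd(z,M)$ so that the residues on $\{0,1,\ldots,w'\}$ are distinct and the equality case cannot occur.
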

\begin{proof}
Let $w_2$ be the smallest value $w$ such that
 $(w\times z)\rem M< M - w +1$ is falsified. Let $w_1$ be the location of the maximum of  $(w\times z)\rem M$ up to $w_2$ exclusively: $0\leq w_1< w_2$.
 Then we have $w_1\leq  w_2$ such that
 $(w_1\times z)\rem M \geq (w_2\times z)\rem M$,
  $(w_1\times z)\rem M< M - w_1 +1$
  and 
   $(w_2\times z)\rem M\geq M - w_2 +1$.
   Thus we have that 
   $(w_1\times z)\rem M - (w_2\times z)\rem M<w_2-w_1$ or
   $(w_2\times z)\rem M- (w_1\times z)\rem M  > w_1-w_2$
   or 
      $M+(w_2\times z)\rem M- (w_1\times z)\rem M  > M+w_1-w_2$.
      Thus we have that 
      $((w_2-w_1)\times z)\rem M  > M-(w_2-w_1)$. This indicates
      that $w_2-w_1$ falsifies $(w\times z)\rem M< M - w +1$, which is only possible if $w_1=0$ by our assumption, but that is not possible since it would imply that the maximum is 0. We have shown the  lemma.
\end{proof}
This lemma is helpful because it indicates that we only need to check the condition $(w\times z)\rem M\geq  M -w +1$ when $(w\times z)\rem M$ is a new maximum.

Given a short multiplier $z$ and a desired number of digits $d$ in base $b$, we may seek the upper range of the variable $w$ such that the $d$~most significant digits of $w\times z$ are exact. We iterate over $k = 0, 1, \ldots$ and for values of $w$ such that $b^{d+k-1}-1< (w\times z) \leq b^{d+k}-1$, we seek the smallest value $w$ such that  $(w\times z) \rem b^k  \geq b^k - w + 1$. When such a value exists, the algorithm terminates with a value $w$ indicating the upper bound.
The lower bound is given by $w\geq (b^{d-1}+z-1)\div z$.
The lower and upper bounds define the range of validity: given any integer $w$ in this range, the $d$~most significant digits of $w \times z$ are exact.
%






\section{Enumerating the Extrema of Remainders}
\label{sec:enumerating}

As we compute $(w \times z) \rem M$ for $w=1, 2, 3, \ldots, M-1$, we encounter new minima and new maxima. We seek to efficiently locate all such extrema.

Consider $M=8$ and $z=3$. Given the first two values, $(1 \times 3) \rem 8= 3$ and $(2 \times 3) \rem 8= 6$, we have that
the former is a minimum while the later is a maximum. See Table~\ref{tab:example}. The next value at $w=3$ is 1, a new minimum. The next extrema is at $w=5$, a maximum. Observe that prior to $w=5$, we had a maximum at $w=2$ and a minima at $w=3$ and that $3+2=5$. As we shall show, all extrema follow this rule: they appear at a location that is the sum of the location of last minimum with the last maximum.

\begin{table}
    \centering
    \begin{tabular}{ccc}\toprule
        $w$  &  $(w \times 3) \rem 8$ & classification\\\midrule
1  & 3  & maximum/minimum\\
2  & 6  & maximum\\
3  & 1  & minimum \\
4  & 4  & \\
5  & 7  & maximum\\
6  & 2 & \\
7 & 5& \\ 
8 & 0 & minimum \\ 

\bottomrule
    \end{tabular}
    \caption{Example of remainders with $M=8$ and $z=3$.}
    \label{tab:example}
\end{table}

When $z$ and $M$ are coprime, then the sequence of values $(w \times z) \rem M$ for $w=1, 2, 3, \ldots, M-1$ are a permutation of
the integers from 1 to $M-1$. When $z$ and $M$ have a non-trivial common divisor, the values repeat. Indeed, whenever $(w \times z) \rem M = (w' \times z) \rem M$, we have that $((w-w') \times z)\rem M = 0$. Assume without loss of generality that $w>w'$, then we have that $(w-w') \times z$ is a positive integer divisible by $M$. Thus we have that the sequence $(w \times z) \rem M$ for $w\in [0,M/\gcd(M,z))$ is made of distinct values. This sequence of values repeats over the next interval $ [M/\gcd(M,z), 2M/\gcd(M,z))$ and so forth, $\gcd(M,z)$ times until $w=M$.

The next lemma shows how we can always determine the location of the next extrema from the last minimum and the last maximum. If the last minimum is at $w=\alpha$ and the last maximum is at $w=\beta$, then the next extrema is at $w=\alpha +\beta$. See Fig.~\ref{fig:maxmin}.
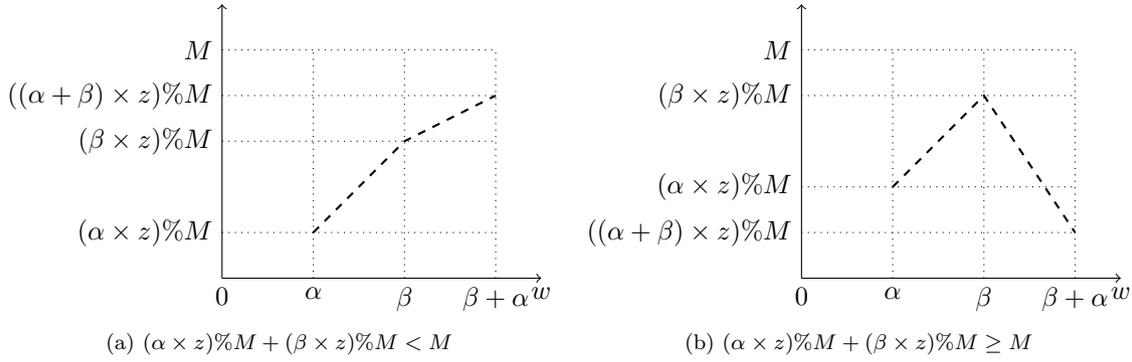
\begin{figure*}\centering
\subfloat[$(\alpha\times z)\% M + (\beta\times z)\% M <M$]{
\begin{tikzpicture}[scale=0.6]
\draw[->] (0,0) -- (7,0) node[anchor=north] {$w$};
\draw	(0,1) node[anchor=east] {$(\alpha\times z)\% M$}
		(0,3) node[anchor=east] {$(\beta\times z)\% M$}
		(0,4) node[anchor=east] {$((\alpha+\beta)\times z)\% M$}
		(0,5) node[anchor=east] {$M$};
\draw	(0,0) node[anchor=north] {0}
		(2,0) node[anchor=north] {$\alpha$}
		(4,0) node[anchor=north] {$\beta$}
		(6,0) node[anchor=north] {$\beta+\alpha$};
\draw[->] (0,0) -- (0,6) node[anchor=east] {};
\draw[dotted] (2,0) -- (2,5);
\draw[dotted] (4,0) -- (4,5);
\draw[dotted] (6,0) -- (6,5);
\draw[dotted] (0,1) -- (6,1);
\draw[dotted] (0,3) -- (6,3);
\draw[dotted] (0,4) -- (6,4);
\draw[dotted] (0,5) -- (6,5);
\draw[thick,dashed] (2,1) -- (4,3) -- (6,4);
\end{tikzpicture}
}
\subfloat[$(\alpha\times z)\% M + (\beta\times z)\% M \geq M$]{
\begin{tikzpicture}[scale=0.6]
\draw[->] (0,0) -- (7,0) node[anchor=north] {$w$};
\draw	(0,2) node[anchor=east] {$(\alpha\times z)\% M$}
		(0,4) node[anchor=east] {$(\beta\times z)\% M$}
		(0,1) node[anchor=east] {$((\alpha+\beta)\times z)\% M$}
		(0,5) node[anchor=east] {$M$};
\draw	(0,0) node[anchor=north] {0}
		(2,0) node[anchor=north] {$\alpha$}
		(4,0) node[anchor=north] {$\beta$}
		(6,0) node[anchor=north] {$\beta+\alpha$};
\draw[->] (0,0) -- (0,6) node[anchor=east] {};
\draw[dotted] (2,0) -- (2,5);
\draw[dotted] (4,0) -- (4,5);
\draw[dotted] (6,0) -- (6,5);
\draw[dotted] (0,2) -- (6,2);
\draw[dotted] (0,4) -- (6,4);
\draw[dotted] (0,1) -- (6,1);
\draw[dotted] (0,5) -- (6,5);
\draw[thick,dashed] (2,2) -- (4,4) -- (6,1);
\end{tikzpicture}
}
\caption{\label{fig:maxmin}Illustration of how the next extrema is computed from the last minimum and the last maximum}
\end{figure*}

\begin{lemma}
\label{lemma:technical}
Suppose that, over a range $w=1,2, \ldots, \max(\beta,\alpha)$
for $\max(\beta,\alpha)<M/\gcd(M,z)$, we have that
$(\beta \times z) \rem M$ is the maximal value of $(w\times z) \rem M$, while 
$(\alpha \times z) \rem M$ is the minimal value, then if we extend the sequence to $w=1,2, \ldots, \alpha+\beta$, we have that   
\begin{itemize}
    \item When $((\alpha+\beta) \times z) \rem M > (\beta \times z) \rem M$,  $((\alpha+\beta) \times z) \rem M$ is the new maximum while $(\alpha \times z) \rem M$ remains the minimum,
    \item otherwise we have that $((\alpha+\beta) \times z) \rem M < (\beta \times z) \rem M$, and $((\alpha+\beta) \times z) \rem M$ is the new minimum while $(\beta \times z) \rem M$ remains the maximum.
\end{itemize}

\end{lemma}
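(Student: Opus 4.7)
Set $a=(\alpha z)\rem M$, $b=(\beta z)\rem M$, and $a_w=(wz)\rem M$. Since $1\leq \alpha,\beta< M/\gcd(M,z)$, the distinctness of $a_w$ on that interval (already observed in the paragraph preceding the lemma) forces $a\geq 1$ and $b\leq M-1$. The preliminary addition rule gives $a_{\alpha+\beta}=(a+b)\rem M$, so the argument splits into Case~A ($a+b<M$), where $a_{\alpha+\beta}=a+b>b$, and Case~B ($a+b\geq M$), where $a_{\alpha+\beta}=a+b-M<a$. This already shows that $\alpha+\beta$ furnishes a new maximum in Case~A and a new minimum in Case~B, and that the ``other'' extremum is not displaced by this single new value.

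The heart of the argument is to exclude any further extremum at positions strictly between $\max(\alpha,\beta)$ and $\alpha+\beta$. For this I would first establish two symmetry bounds. For $s\in[1,\beta-1]$, distinctness gives $a_s<b=a_\beta$, so the subtraction rule yields $a_{\beta-s}=b-a_s$; since $\beta-s\in[1,\max(\alpha,\beta)]$, the hypothesis $a_{\beta-s}\in[a,b]$ forces $a_s\leq b-a$. Dually, for $r\in[1,\alpha-1]$, distinctness gives $a_r>a=a_\alpha$, so the subtraction rule yields $a_{\alpha-r}=M+a-a_r$, and $a_{\alpha-r}\in[a,b]$ forces $a_r\geq M+a-b$.

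With these bounds in hand, any intermediate $w\in(\max(\alpha,\beta),\alpha+\beta)$ can be decomposed as $w=\alpha+s$ with $s\in[1,\beta-1]$ (when $\alpha\geq\beta$) or $w=\beta+r$ with $r\in[1,\alpha-1]$ (when $\alpha\leq\beta$). Plugging the symmetry bounds into the addition rule shows that the first decomposition avoids any modular wrap and produces $a_w=a+a_s\in(a,b)$, while the second decomposition forces a wrap and produces $a_w=b+a_r-M\in(a,b)$; distinctness upgrades both inclusions to strict ones. Consequently no intermediate $w$ introduces a new extremum, and combining this with the Case~A/B computation completes the proof. The one non-routine step is spotting the symmetry bounds: once one realizes that reflecting an index through $\alpha$ or $\beta$ converts the extremal hypothesis on $[1,\max(\alpha,\beta)]$ into two-sided control on every smaller $a_s$ and $a_r$, the remaining case analysis is a mechanical application of the preliminary addition and subtraction rules.
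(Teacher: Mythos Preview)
Your argument is correct and rests on the same reflection idea as the paper's proof, but you organise it differently. The paper assumes $\alpha<\beta$, splits into the two cases $a_{\alpha+\beta}>b$ and $a_{\alpha+\beta}<b$, and in each case argues by contradiction through four sub-cases: supposing an intermediate index $\omega+\beta$ beats the claimed extremum, it shifts that index by $\pm\alpha$ (or reflects it to $\alpha-\omega$) to land back in $[1,\beta]$ and contradict the extremality of $\beta$. You instead extract the reflection step up front as the two ``symmetry bounds'' $a_s\le b-a$ for $s<\beta$ and $a_r\ge M+a-b$ for $r<\alpha$, and then handle all intermediate $w$ uniformly by a single addition-rule computation. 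This buys you a shorter and more symmetric write-up (no need to fix whether $\alpha<\beta$ at the outset, and two cases instead of four), at the cost of having to state and prove the symmetry bounds explicitly.

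One point you should make explicit when you write it out: the ``distinctness upgrades both inclusions to strict ones'' step needs the intermediate $w$ (or, equivalently, the reflected indices $\beta-s$, $\alpha-r$) to lie in $[1,M/\gcd(M,z))$. For the reflected indices this is immediate since they sit below $\max(\alpha,\beta)$; alternatively one can check that the hypotheses force $\alpha+\beta\le M/\gcd(M,z)$, so every intermediate $w$ is already in the distinctness range. Either way the gap is easily closed, but it deserves a sentence.
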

\begin{proof}
For brevity, assume that $\alpha<\beta$: the counterpart ($\alpha>\beta$) follows by symmetrical arguments.

We want to show that $((\alpha+\beta) \times z)$ is either smaller
than $ (\alpha \times z) \rem M$ or larger than  $(\beta \times z) \rem M$:
\begin{itemize}
\item Suppose that  $((\alpha+\beta) \times z) \rem M \geq (\alpha \times z) \rem M$ then $((\alpha+\beta) \times z) \rem M = (\alpha \times z) \rem M + (\beta \times z) \rem M >(\beta \times z) \rem M  $.
\item Suppose that  $((\alpha+\beta) \times z) \rem M \leq (\beta \times z) \rem M$ then $((\alpha+\beta) \times z) \rem M = (\alpha \times z) \rem M +(\beta \times z) \rem M -M< (\alpha \times z) $.
\end{itemize}




First suppose that $((\alpha+\beta) \times z) \rem M > (\beta \times z) \rem M$. We want to show that $((\alpha+\beta) \times z) \rem M$ is the new maximum over the extended range while $(\alpha \times z) \rem M$ remains the minimum.
Suppose that it is not the case, then at least one of the following two cases hold:
\begin{itemize}
    \item If $((\alpha+\beta) \times z) \rem M $ is not the new maximum, there must be a new, even larger maximum.
    Thus there must be a value $\omega$ satisfying $0<\omega < \alpha$ such that 
     $((\omega+\beta) \times z) \rem M >  ((\alpha+\beta) \times z) \rem M >(\beta \times z) \rem M $.
     It implies that
     $((\omega+\beta) \times z) \rem M =
     ((\omega-\alpha) \times z) \rem M+
     ((\alpha+\beta) \times z) \rem M 
     = ((\omega-\alpha) \times z) \rem M+
     (\alpha \times z) \rem M+
     (\beta \times z) \rem M =
     ((\omega+\beta-\alpha) \times z) \rem M+
     (\alpha \times z) \rem M
     $. Hence we have that
     $ ((\omega+\beta-\alpha) \times z) \rem M = ((\omega+\beta) \times z) \rem M-  (\alpha \times z) \rem M>
    ((\alpha+\beta) \times z) \rem M-  (\alpha \times z) \rem M=(\beta \times z)\rem M $. Hence we have that
         $((\omega+\beta-\alpha) \times z) \rem M >  (\beta \times z) \rem M $ which contradicts that $\beta$ is a maximum for the range up to $\beta$ since $\omega+\beta-\alpha<\beta$. Thus this case is not possible.
    \item If  $(\alpha \times z) \rem M$ did not remain the minimum, then there must be a new, even smaller minimum: there must be a value $\omega$ satisfying $0<\omega < \alpha$ such that 
     $((\omega+\beta) \times z) \rem M <  (\alpha\times z)\rem M<(\beta\times z)\rem M$.
     It implies that
     $ (\omega \times z) \rem M 
     +(\beta\times z)\rem M - M
     <  (\alpha\times z)\rem M$
     or
     $ (\alpha\times z)\rem M - (\omega \times z) \rem M> (\beta\times z)\rem M - M$.
Because  $0<\omega < \alpha$, we must have that $ (\omega \times z) \rem M >(\alpha \times z) \rem M$.
Thus we have that 
$((\alpha - \omega) \times z) \rem M$
is 
$M+(\alpha \times z) \rem M - (\omega \times z) \rem M >(\beta\times z)\rem M$
      which again contradicts the fact that  $\beta$ is a maximum for the range up to $\beta$ since $\alpha-\omega<\alpha<\beta$.
\end{itemize}
Hence the result holds.

Second suppose that $((\alpha+\beta) \times z) \rem M < (\beta \times z) \rem M$. We want to show that  $((\alpha+\beta) \times z) \rem M$ is the new minimum while $(\beta \times z) \rem M$ remains the maximum.
Suppose that it is not the case, then at least one of the following two cases hold:
\begin{itemize}
\item If $((\alpha+\beta) \times z) \rem M$  is not the new minimum, then there must be a new even smaller minimum.  
Thus there must be a value $\omega$ satisfying $0<\omega < \alpha$ such that 
     $((\omega+\beta) \times z) \rem M <  ((\alpha+\beta) \times z) \rem M $. But this inequality implies that
      $(\omega\times z) \rem M < (\alpha \times z) \rem M $ which would contradict the fact that $(\alpha \times z) \rem M $ was the minimum.
\item If $(\beta \times z) \rem M$ is no longer the maximum, then there must be a new larger maximum. Thus there must be a value $\omega$ satisfying $0<\omega < \alpha$ such that  $((\omega+\beta) \times z) \rem M > (\beta \times z) \rem M $.
Hence $((\omega+\beta) \times z) \rem M= (\omega \times z) \rem M+ (\beta \times z) \rem M $.
Because $(\alpha\times z) \rem M$ is the minimum up to $\beta$, we must have  that $(\omega \times z) \rem M>(\alpha\times z) \rem M$.
Therefore we have that
 $((\omega+\beta) \times z) \rem M= (\omega \times z) \rem M+ (\beta \times z) \rem M >  (\alpha \times z) \rem M + (\beta \times z) \rem M$. However, since  $((\alpha+\beta) \times z) \rem M < (\beta \times z) \rem M$, we must have that $(\alpha \times z) \rem M + (\beta \times z) \rem M \geq M$ and so $((\omega+\beta) \times z) \rem M\geq M$, a contradiction.
\end{itemize}
Hence the result holds. 
\end{proof}

Lemma~\ref{lemma:technical} implies that you can visit all of the extrema of $(w \times z) \rem M $ for $w=1, 2, \ldots$ by first finding the first two extrema (a maximum at $\beta$ and a minimum at $\alpha$), and then locate a new maximum or a new minimum at $\alpha+\beta$, and so forth.
Through an iterative process, you are guaranteed to only ever visit running extrema.

Unfortunately, such a process could be slow. Consider the case when $z=1$. We have that the sequence $(w \times z) \rem M $ for $w=1, 2, \ldots$ is $1,2, \ldots$ It implies that every single possible value of $w$ is, when it is encountered, a new maximum. When applying Lemma~\ref{lemma:technical} to this case, we find that $\alpha =1 $ (throughout), with $\beta$ taking the values $2,3, \ldots$ Such an algorithm would encounter $M$~extrema and would run in time $\Omega(M)$. Thankfully, we can characterize the location of the extrema in logarithmic time, as we show next.

We have that $z \rem M =( 2z )\rem M$ if an only if $M$ divides $z$. Assume that $M$ does not divide $z$.
As long as $w < M/\gcd(z,M)$, we have that  $(w\times z) \rem M \neq 0$.

Thus, after two elements in the series $(w \times z) \rem M$ for $w=1, 2, \ldots$, we have a minimum and a maximum value. We write the maximum value $b$, and its corresponding $w$ is $\beta$. 
We write the minimum value $a$, and its corresponding $w$ is $\alpha$. 
We have that $b>a>0$.
As we keep progressing over  $w\in [3,M/\gcd(M,z))$, we may encounter a new maximum or a new minimum.

Assume that the first value was a minimum (i.e., $a= z \rem M$ and $\alpha =1$) followed by a maximum  (i.e., $b= 2z \rem M$ and $\beta =2$).
If $3a<M$, then we have a new maximum immediately after at $\beta = 3$. Similarly if $4a<M$ and so forth.
We have exactly
$(M-1-a)\div a$ consecutive maxima: $b=2a$ at $\beta =2$, $3a$ at $\beta =3$,\ldots,  $a+( (M-1-a)\div a) \times a$ at $\beta =1+(M-1-a)\div a$. 

An analogous scenario unfolds when we assume that the first value was a maximum ($b= z \rem M$ and $\beta =1$) followed by a minimum ($a= 2z \rem M$ and $\alpha =2$).
If $a+b\geq M$, then we have $(a+b) \rem M = a+b-M< a+b$, and thus we have a new minimum (smaller by $M-b$). 
We have
$b\div (M-b)$ consecutive minima: $b+b-M$ at $\alpha =2$, $b+2(M-b)$ at $\alpha =3$,\ldots,  $b+( b\div (M-b)) \times (b-M)$ at $\beta =1+b\div (M-b)$.
 We have that the last maximum is greater than $M/2$.

Using Lemma~\ref{lemma:technical} and the first two extrema, we can efficiently iterate through all other extrema: 
\begin{itemize}
\item Suppose that we found our last minimum at $\alpha$. We find a new maximum at $\beta$ ($\alpha < \beta$). By Lemma~\ref{lemma:technical}, this maximum is followed by up to
$ (M-b-1) \div a$ even greater maxima:
$b+a$ at $w=\beta+\alpha$, \ldots,  $b+((M-b-1) \div a) \times a$ at $w=\beta + ((M-b-1) \div a)\times \alpha$.
The maxima are each time incremented by $a$, and they appear at locations incremented by $ \alpha$.
If we were to continue one more step (increment by $a$ once more), we would exceed $M-1$.
If we redefine $b\leftarrow b+((M-b-1) \div a) \times a$ and $\beta \leftarrow \beta + ((M-b-1) \div a)\times \alpha$, then we have that the value at $\alpha + \beta$ is $(a+b) \rem M = a+b-M<a$, thus a new minimum.
\item Suppose that we found our last maximum at $\beta$. We find a new minimum at $\alpha$ ($\beta < \alpha$). 
 By Lemma~\ref{lemma:technical}, this minima is followed by 
$a\div (M-b)$ even smaller minima:
$a+b-M$ at $w=\alpha+\beta$, \ldots,  $a+(a\div (M-b)) \times (b-M)$ at $w=\alpha + (a\div (M-b))\times \beta$.
They happen at locations separated by $\beta$ and decremented by $b-M$. If we were to continue one more step, we would increment by $b$ (as opposed to $b-M$), and we would not have a new minimum.
If we redefine $a\leftarrow a+(a\div (M-b)) \times (b-M)$ and $\alpha \leftarrow \alpha + (a\div (M-b))\times \beta$, then we have that the value at $\alpha + \beta$ is $(a+b) \rem M = a+b>b$, thus a new maximum.
\end{itemize}

Thus we have that $(w \times z)\rem M$ for $w=1,2,\ldots,  M/\gcd(z,M) - 1$ alternates between new equispaced sequences of maxima and new equispaced sequences of minima.
We do not need to compute $\gcd(z,M)$ explicitly: we
know that when $w=M/\gcd(z,M)$, we have that
$(w\times z)\rem M= 0$. Thus it suffices to check for
a minimum value of 0.
The algorithm given in  Fig.~\ref{fig:gaps}
outputs the sequence of \emph{gaps} (successive $\alpha$ and $\beta$) which determine the locations of the extrema. Each gap value can generate several equispaced extrema.

We can check that with every iteration, going through a sequence of maxima, a sequence of minima and then back to a sequence of maxima, the gap $\beta$ has more than doubled. And similarly for $\alpha$. Thus we have that the algorithm given in  Fig.~\ref{fig:gaps} runs in time $O(\log (M/\gcd(z,M)))$ when assuming that arithmetic operations run in  constant time.

Consider the algorithm of Fig.~\ref{fig:gaps} with an example: 
 $M=8$ and $z=3$. 
 At first we have $a= b= (1\times 3)\rem 8 = 3$ and $w=\alpha=\beta=1$. The list  \texttt{lbda} is initially empty. 
\begin{enumerate}
    \item We consider $v=(a+b)\rem M = (3+3)\rem 8  = 6$. It is a new maximum ($v>b$). 
    We compute $t=(M-b-1)\div a = (8-3-1)\div 3$ which is one.
    We append $w=1$ to \texttt{lbda}.  We move to $w=1+1=2$ and set $\beta=2$. We have $b=6$ and $a=3$.

    \item We consider $v=(a+b)\rem M = (3+6)\rem 8=1$. It is a new minimum ($v<a$). We append $w=2$ to $\lambda$. We compute  $ a \div (M-b) = 3 \div 2=1$. We move to $w=2+1=3$ and we set $\alpha=3$. We have $a=1$.
    \item We consider  $v=(a+b)\rem M = (1+6)\rem 8=7$. We have a new maximum. We append $w=3$ to \texttt{lbda}. We compute $(M-b-1)\div a = (8-6-1)\div 1$ which is one, again. We move to $w=w+\beta = 3+2=5$. We set $\beta=5$ and $b=7$.
    \item We consider $(a+b)\rem M= (1+7)\rem 8 = 0$. It is a new minimum, we append $\beta=5$ to \texttt{lbda} and we exit the main  loop, returning $\mathtt{lbda}=\{1,2,3,5\}$.
\end{enumerate}

If we compute $\distance_8(0,(w*3)\rem 8)$ for $w=1,2,3,5$ we get 3,2,1,1. That is, the distance of the various extrema to zero diminishes. It is clear that it must be so for successive (smaller) minima and also for successive (larger) maxima: the distance with zero must be strictly decreasing. Indeed a maximum is a value close to $M$, the closer to $M$, the larger it is. When we reach $M-1$, the largest value, we have that $\distance_M(0,M-1)=1$, the minimal distance.
But it is also true of a minimum followed by a maximum, or a maximum by a minimum: the distance with zero must remain the same or decrease. E.g., it follows by inspection: if we have a minimum value $a$, then it is not possible for the largest maximum of the next sequence of maxima to be more than $a$ away from $M$.

\begin{figure}
\begin{python}
def gaps(z, M):  
  w = 1
  lbda = []
  a = z 
  alpha = 1
  b = z 
  beta = 1
  while True:
    v = (a + b) 
    if v < a:
      lbda.append(w)
      if a 
      t = a // (M - b)
      w = w + alpha + (t - 1) * beta
      alpha = w
      a = (a + t * b) 
    else:
      t = (M - b - 1) // a
      lbda.append(w)
      w = w + beta + (t - 1) * alpha
      beta = w
      b = (b + t*a) 
  return lbda
\end{python}
\caption{\label{fig:gaps}Python code to 
compute all of the gaps between the extrema of $(w \times z)\rem M$ for $w=1,2,\ldots, \frac{M}{\gcd(z,M)}  - 1$. $M$ and $z$ should be  positive integers, and $z$ should not be a multiple of $M$.}
\end{figure}

\subsection{Bounding Remainders with an Offset}
\label{sec:offset}
The algorithm of Fig.~\ref{fig:gaps} provides an efficient algorithm to enumerate all the extrema of remainders  $ (w \times z) \rem M$ for $w=1,2,\ldots$ 
    We might want to enumerate the extrema starting from an arbitrary point:  $(w \times z) \rem M$ for $w=A, A+1, \ldots, B$ in which case
    we can rewrite the problem as $(A\times z + w \times z) \rem M$ for $w=0, 1, \ldots, B-A$.
Setting $b=A\times z$, we find that it is equivalent to
finding the extrema of  $(b + w \times z) \rem M$ (for $w=0, 1, \ldots$).
Thus we want to extend our previous results to remainder of a product with an offset ($b$).

We cannot rely directly on the earlier result for  $( w \times z) \rem M$ (Lemma~\ref{lemma:technical}) which indicates that the next extrema is effectively the sum of the previous minima and the previous maxima. Indeed, consider  $(7 + w \times 3) \rem 8$: we have the value 2 at $w=1$, followed by value 5 (a new maximum) at $w=2$, value 0 (a new minimum) at $w=3$, value 6 (a new maximum) at $w=5$, value 7 (a new maximum) at $w=8$. However, we can still make good use of Lemma~\ref{lemma:technical}.

Suppose that 
$(b + \beta \times z) \rem M$ is the maximum so far over $(b + w \times z) \rem M$ for $w=1,2,\ldots,\beta$. Suppose
that the next extrema is at
$(b + (\beta+k) \times z) \rem M$.
Then we must have that 
$(k \times z) \rem M$
is a minimum of $(w \times z) \rem M$ over $w=1,2,\ldots, k$ otherwise we would have a new intermediate extrema. And similarly when we start from a minima. 
It follows that we can access the extrema of $(b + w \times z) \rem M$ 
by considering offsets by the gap values generated by the algorithm of Fig.~\ref{fig:gaps}. Because the gaps are monotonic, and because our  maxima are only larger, and our minima only smaller, there is no need to consider previous gaps once they can no longer increase a maximum or decrease a minimum.

Even with a non-zero offset, the values still repeat: $ (b+ w \times z) \rem M = (b+ (w + M/\gcd(z,M))\times z) \rem M$. It is not necessary to compute $\gcd(z,M)$, we could
instead stop when $w> M$ since no new extrema can be found after $w$ reaches $M/\gcd(z,M)$ given that the values repeat.

Thus the distance between the extrema of
 $(w \times z) \rem M$ for $w=A, A+1, \ldots, B$ 
must be within the values produced by the \texttt{gaps} function of Fig.~\ref{fig:gaps}, and that the gaps only grow larger. 
As an application, 
the \texttt{find\_min\_max} function from Fig.~\ref{fig:extrema}  computes the locations of all of the maxima and minima 
 of $(w\times z) \rem b^k$ from $w=A$ to $w=B$ (inclusively). 
 It encodes sequences of equispaced minima (or equispaced maxima) as a triple with the location of the last extrema, 
 their number and the size of the gaps between the extrema. It runs in time $O(\log (M/\gcd(z,M)))$, assuming that arithmetic operations run in constant time.

\begin{figure}
\begin{python}
def find_min_max(z, M, A, B):
  mnma = [(A, 0, 0)]
  mxma = [(A, 0, 0)]
  facts = gaps(z, M)
  mi = (z * A) 
  ma = (z * A) 
  fact_index = 0
  b = A * z
  w = 0
  while True:
    offindex = facts[fact_index]
    v = (z * (w + offindex) + b) 
    if w + offindex > B-A: break
    if v < mi:
      w += offindex
      mi = v
      basis = (z * w + b) 
      off = (z * offindex) 
      times = basis // (M - off)
      if A + w + times * offindex > B:
        times = (B - A - w) // offindex
      w += offindex * times
      mi = (z * w + b) 
      mnma.append((w+A, times, offindex))
    elif v > ma:
      w += offindex
      ma = v
      basis = (z * w + b) 
      off = (z * offindex) 
      times = (M - 1 - basis) // off
      if A + w + times * offindex > B:
        times = (B - A - w) // offindex
      w += offindex * times
      ma = (z * w + b) 
      mxma.append((w + A, times, offindex))
    else:
      fact_index += 1
      if fact_index == len(facts): break
  return (mnma, mxma)
\end{python}
\caption{\label{fig:extrema}Python code to
enumerate all extrema  of $(w\times z) \rem M$ for $w = A,\ldots, B$. $M$ and $z$ should be  positive integers, and $z$ should not be a multiple of $M$.}
\end{figure}

\section{Computing the Range}
\label{sec:computingtherange}
 Our main function (\texttt{find\_range}) is provided Fig.~\ref{fig:findrange}: 
given a multiplier $z$ and a desired number digits in a given base, it computes a range of values $[\mathrm{lb},\mathrm{ub})$ such that
if $w \in [\mathrm{lb},\mathrm{ub})$, then $w\times z$ has its most significant digits exact even when $z$ is a truncated multiplier---irrespective of the unknown digits. When the interval is empty, 
the \texttt{None} value is returned.

Following \S~\ref{sec:shortmultipliers}, the function checks $(w\times z) \rem b^k  < b^k - w + 1$ over 
the range $b^{d+k-1}\leq (w\times z) \leq b^{d+k}-1$ to verify whether we can compute digits exactly: starting 
with $k=0$, we increment $k$ until a value $w$ satisfying $(w\times z) \rem b^k  \geq  b^k - w + 1$ is found.
 To do so, the \texttt{find\_range} function relies on 
 the \texttt{find\_min\_max} function from Fig.~\ref{fig:extrema}. 
We handle separately the case when $z$ is a multiple of $b^k$, in which case $(w\times z) \rem b^k=0$
and  $(w\times z) \rem b^k  < b^k - w + 1$ becomes 
$w < b^k + 1$: thus we must stop if $B\geq b^k + 1$.

\begin{figure}
\begin{python}
def find_range(z, digits, base):
 lb = (base ** (digits - 1) + z - 1) // z
 k = 0
 while True:
  A = (base ** (digits + k - 1) + z - 1)//z
  B = base ** (digits + k) // z
  if B * z == base ** (digits + k):
   B -= 1
  if B < A:
   k = k + 1
   continue
  M = base ** k
  if z 
    if B >= M+1:
      return (lb, M+1)
  (mnma, mxma) = find_min_max(z, M, A, B)
  for (beta, times, gap) in mxma:
   if beta * z 
    ub = beta
    if times > 0:
     top = beta * z 
     bottom = gap + gap * z 
     mt = top // bottom
     ub = beta - top // bottom * gap
    if ub <= lb:
     return None
    return (lb, ub)
  k = k + 1
\end{python}
\caption{\label{fig:findrange}Python code to compute the range of values of $w$ for which the most significant digits of $w\times z$ are exact even if $z$ is truncated. The function allows the user to specify the number of exact most-significant
 digits desired (\texttt{digits}) as well as the basis (\texttt{base}).}
\end{figure}
We can generate Table~\ref{table:acc} with a Python script:
\begin{python}
for mypi in [
  3141592653,
  31415926535,
  314159265358,
  3141592653589,
  31415926535897,
  314159265358979,
  3141592653589793,
  31415926535897932,
  314159265358979323,
  3141592653589793238,
  31415926535897932384,
  ]:
  print(find_range(mypi, 10, 10))
\end{python}

\section{Conclusion and Future Work}

It is intuitive that we are able to compute
the most significant digits of a product using
only the most significant digits of one of the
multipliers (i.e., a short multiplier). 
Thankfully, we can check for the
exact range of validity of a short multiplier
using efficient logarithmic-time algorithms. 

We have identified some future work:
\begin{itemize}

\item We construct short multipliers by truncating existing multipliers. However it may be possible to round the multiplier instead of truncating it.
Similarly, we compute the most significant digits of a product by truncation: we may round instead. There are many rounding rules that should be considered when the result is ambiguous: round up, round down, round to even~\cite{reiser1975evading}, round to odd.

\item From a short multiplier and a desired number of most-significant digits, we have derived a range of validity. We can also start by a desired number of most-significant digits and a range of validity, and derive the smallest short multiplier: it suffices to construct ever more precise short multipliers. A direct algorithm to efficiently derive the best short multipliers might be useful.
\end{itemize}

\section{Funding}
This research was funded by the Natural Sciences and Engineering Research Council of Canada, Grant Number: RGPIN-2017-03910.

\section{Data Availability}

No new data were generated or analysed in support of this research.

\bibliographystyle{compj}
\bibliography{exactshort}

\end{document}